\theoremstyle{plain}
\newtheorem{theorem}{Theorem}[section] 
\newtheorem{proposition}[theorem]{Proposition}
\newtheorem{corollary}[theorem]{Corollary}
\theoremstyle{definition}
\newtheorem{remark}[theorem]{Remark}
\numberwithin{equation}{section}
\begin{document}

\title{\textbf{Flagged Extensions and Numerical Simulations for Quantum Channel Capacity: Bridging
Theory and Computation}}

\author{
\textsc{Vahid Nourozi} \\
%\textsc{Farzaneh Ghanbari}$^{2} \\
\vspace{-3ex}\\
\small The Klipsch School of Electrical and Computer Engineering,\\ 
\small New Mexico State University, Las Cruces, NM 88003, USA\\
%\small $^2$Department of Pure Mathematics, Faculty of Mathematical Sciences,\\ 
%\small Tarbiat Modares University, P.O.Box:14115-134, Tehran, Iran\\
\small \texttt{nourozi@nmsu.edu}
}
\date{}

\maketitle

\begin{abstract}
\noindent
I will investigate the capacities of noisy quantum channels through a combined analytical
and numerical approach. First, I introduce novel flagged extension techniques that embed
a channel into a higher-dimensional space, enabling single-letter upper bounds on
quantum and private capacities. My results refine previous bounds and clarify noise
thresholds beyond which quantum transmission vanishes. Second, I present a simulation
framework that uses coherent information to estimate channel capacities in practice,
focusing on two canonical examples: the amplitude damping channel (which I confirm is
degradable and thus single-letter) and the depolarizing channel (whose capacity requires
multi-letter superadditivity). By parameterizing input qubit states on the Bloch sphere, I
numerically pinpoint the maximum coherent information for each channel and validate the
flagged extension bounds. Notably, I capture the abrupt transition to zero capacity at high
noise and observe superadditivity for moderate noise levels.

\vspace{0.5em}\noindent
\textbf{Keywords: Quantum Channel Capacity, Coherent Information, Flagged Extensions, Quantum Shannon Theory, Variational Optimization, Numerical Simulation, Amplitude Damping Channel, Depolarizing Channel, Degradable Channels, Entanglement-Assisted Communication} 
\end{abstract}

\section{Introduction}

Tight converse techniques for noisy channels have advanced considerably in the last few years: 
(i) non-orthogonal flagged extensions and their convex-hull refinements for Pauli/depolarizing
 noise \cite{FKG20,KFG22}; 
(ii) approximate-degradability converses in low-noise regimes \cite{Sutter17}; and 
(iii) efficiently computable SDP strong-converse bounds equal to the max-Rains information \cite{Wang19,Wang21}. 
For amplitude- and generalized-amplitude-damping channels, tight antidegradable/EB regions and capacity upper bounds have also been mapped out \cite{Khatri20}.

Quantum information theory seeks to characterize the limits of information transmission and storage using quantum systems. A central concept is the capacity of a quantum channel, which generalizes Shannon’s classical capacity to quantum communications. For a noisy quantum channel $\mathcal{N}$, one can define various capacities: the quantum capacity $Q$ (qubits per channel use for reliable quantum state transmission), the private capacity $P$ (classical bits per use that remain secret from the environment), and the classical capacity $C$ (classical bits per use for public communication). Determining these capacities is notoriously difficult \cite{11}. Unlike classical channels, where capacity formulas often simplify and additivity holds, quantum channel capacities usually require optimizing over infinitely many channel uses due to entanglement-assisted encoding and superadditivity \cite{11}. This paper investigates unexplored capacity bounds and novel simulation approaches in quantum Shannon theory, focusing on rigorous mathematical formulations and numerical validation.

Quantum channel capacities are not only of theoretical interest but also have practical implications for quantum networks, error correction, and cryptography \cite{11}. However, even basic questions remain open. For example, the exact quantum capacity of the depolarizing channel (a standard model of noise) is still unknown \cite{22}. I address this gap by deriving new theorems that bound channel capacities using innovative techniques (such as flagged channel extensions and entropy inequalities) and by proposing a simulation framework to estimate these capacities numerically.

\section*{Literature Review}
Early work on quantum communication capacity built on classical information theory and the pioneering Holevo bound. Holevo’s theorem, proved in the 1990s by Holevo, Schumacher, and Westmoreland (HSW), gives the classical capacity of a quantum channel in terms of the channel’s Holevo information $\chi$ \cite{33}. Specifically, for a quantum channel $\mathcal{N}$, the HSW theorem states that the maximal mutual information between input classical variable $X$ and output quantum system $B$ (optimized over input ensembles) is an achievable rate for transmitting classical data \cite{33}. In formula form, $C(\mathcal{N}) = \lim_{n\to\infty}\frac{1}{n}\chi(\mathcal{N}^{\otimes n})$, where $\chi(\mathcal{N}) = \max_{{p(x),\rho_x}} \Big[ H!\big(\sum_x p(x)\mathcal{N}(\rho_x)\big) - \sum_x p(x) H(\mathcal{N}(\rho_x)) \Big]$ is the one-shot Holevo information (here $H(\cdot)$ denotes the von Neumann entropy). The additivity of $\chi$ was long suspected, but it was ultimately shown that entangled inputs can sometimes increase classical communication rates (Hastings, 2009), making the general classical capacity formula also a multi-letter regularization in the worst case.

For quantum capacity $Q$, which quantifies the ability to transmit qubits (equivalently, to establish entanglement between sender and receiver), the fundamental lower bound is given by the coherent information. The Lloyd-Shor-Devetak (LSD) theorem proved that coherent information is an achievable rate for quantum communication \cite{44}. Coherent information of a channel $\mathcal{N}$ with respect to an input state $\rho$ is defined as $I_{\mathrm{c}}(\rho,\mathcal{N}) := H(\mathcal{N}(\rho)) - H((\mathcal{I}\otimes\mathcal{N})(|\Psi_\rho\rangle\langle\Psi_\rho|))$, where $|\Psi_\rho\rangle$ is a purification of $\rho$ and the second term is the entropy of the channel’s environment (the output of the complementary channel). Intuitively, $I_{\mathrm{c}} = H(\text{output}B) - H(\text{output}E)$ measures the net information that remains with the receiver $B$ and is not leaked to the environment $E$. The LSD theorem asserts that
\[
  Q(\mathcal{N}) \;\ge\; I_{\mathrm{c}}\bigl(\hat{\rho},\,\mathcal{N}\bigr)
\] for some  $\hat{\rho}$ and, more strongly, $Q(\mathcal{N}) = \lim{n\to\infty}\frac{1}{n}\max_{\rho}I_{\mathrm{c}}(\rho,\mathcal{N}^{\otimes n})$. In other words, the quantum capacity is given by the regularized coherent information \cite{11}. Unlike the classical capacity, this formula involves an optimization over arbitrary many channel uses, reflecting the notorious fact that coherent information is generally not additive \cite{11}. For degradable channels (a class of channels where the environment’s output can be simulated from the receiver’s output), coherent information is additive and equals the channel’s quantum capacity in a single-letter formula. However, most channels are non-degradable, and evaluating $Q$ exactly becomes intractable. Devetak’s 2005 work gave a rigorous proof of the quantum capacity theorem and highlighted the parallel between quantum communication and private classical communication \cite{55}.

The difficulty of computing capacities has motivated many bounds and special-case solutions. One line of research focuses on conditions for zero quantum capacity. If a channel is entanglement-breaking (EB), it outputs a state independent of the input entangled system, and $Q=0$ immediately. The depolarizing channel, which replaces the input state with white noise with probability $p$, becomes entanglement-breaking beyond a certain noise threshold (for a qubit depolarizing channel this occurs at $p=2/3$) \cite{66}. Recent research has gone further to identify noise levels where the channel is not just entanglement-breaking but anti-degradable (meaning the receiver’s output can be simulated from the environment’s, implying $Q=0$ as well). Notably, it was proven that for any quantum channel mixed with sufficiently strong white noise, the quantum capacity drops to zero once a critical noise level is exceeded \cite{11}. In the case of the qubit depolarizing channel, new results have shown a “phase transition” at a noise rate larger than the entanglement-breaking point, extending the known zero-capacity region \cite{11}. These advances underscore the fragility of quantum capacity under noise \cite{11} and help narrow down the gap where $Q$ might be positive but small.

Another important development is the use of flagged channel extensions to bound capacities. In 2020–2022, researchers introduced the idea of augmenting a channel with a “flag” system that carries information about which noise event occurred \cite{22}. By cleverly choosing flagging schemes, one can often convert a complicated channel into a simpler one (even a degradable one) that majorizes the original channel in terms of degradability or noise ordering. Kianvash et al. (2022) applied flagged extensions to mixtures of unitary channels and arbitrary noise channels. They derived general conditions under which the flagged extension is degradable, yielding explicit upper bounds on both the quantum capacity $Q$ and private capacity $P$ of the original channel \cite{22}. Specializing to Pauli noise channels, this led to state-of-the-art upper bounds for the quantum capacity of the qubit depolarizing channel, the BB84 channel (a Pauli channel with biased noise), and the generalized amplitude damping channel \cite{22}. These bounds tightened the known capacity estimates, often coming very close to the best known lower bounds, thereby significantly shrinking the uncertainty in those channels’ true capacities.

Beyond analytic bounds, there is growing interest in numerical and variational methods to estimate channel capacities. One novel approach is to sidestep direct entropy calculations (which are experimentally and computationally challenging) by measuring the purity of states through a channel \cite{66}. J. ur Rehman et al. (2022) proposed a variational framework that uses easily measurable quantities to bracket the channel capacity. By measuring the output state’s purity for a given input and using that to bound the output entropy, they obtain computable upper and lower bounds on various capacities \cite{66}. Then, using stochastic optimization (variational techniques) to vary the input state, they maximize these bounds. This yields, for example, a lower bound on the quantum capacity for arbitrary channels and corresponding upper bounds on related one-shot quantities \cite{66}. Remarkably, their scheme can estimate these bounds with a single measurement setting without needing full quantum state tomography \cite{10}. Such techniques represent a new simulation-driven paradigm in quantum Shannon theory: rather than solving capacity formulas analytically, one approximates capacities by combining theoretical inequalities with numerical search. In a similar spirit, methods based on contraction coefficients and partial orderings of channels have emerged \cite{77}. These methods use the data-processing inequality in refined ways to compare channels. For instance, if channel $\mathcal{N}_1$ is “less noisy” than $\mathcal{N}_2$ (in a precise sense involving output entropy for all input ensembles), one can sometimes assert $C(\mathcal{N}_1)\le C(\mathcal{N}_2)$ or $Q(\mathcal{N}_1)\le Q(\mathcal{N}_2)$. Hirche et al. (2022) developed approximate versions of these orderings and showed they can simplify proofs of capacity bounds \cite{77}. These trends indicate a rich interplay between mathematical theory and computational techniques in the quest to understand quantum channel capacities.

Finally, I note the extraordinary phenomenon of superactivation, discovered by Smith and Yard \cite{99}: two channels each with zero quantum capacity can be combined to yield a channel with positive capacity \cite{88}. This counterintuitive result (often summarized as “two wrongs make a right” in quantum information) highlights that the structure of quantum capacity can be very complex. Any comprehensive theory or simulation must account for such non-additive effects. Our work, while primarily focused on capacities of individual channels, is informed by these phenomena—I seek universally valid bounds (holding for all channel uses, thereby not violated by superactivation or other collective effects), and I validate our results with numerical tests where possible.

Classical capacity is well-characterized by the HSW theorem, but quantum and private capacities pose open questions. There is a need for tighter analytic bounds in the intermediate noise regime (where channels are neither obviously perfect nor obviously zero-capacity). Moreover, effective computational methods are needed to evaluate capacities when analytic solutions elude us. In this paper, I address these gaps by deriving a new upper bound on quantum capacity (applicable to channels that are convex combinations of simpler maps) and by presenting a simulation method to estimate coherent information and channel capacities via numerical optimization.

\section{Mathematical Formulation}
I model a quantum channel as a completely positive, trace-preserving (CPTP) linear map $\mathcal{N}: \mathcal{D}(H_A) \to \mathcal{D}(H_B)$, where $\mathcal{D}(H)$ denotes density operators on a Hilbert space $H$, and $A$ (input) and $B$ (output) label the sender’s and receiver’s system respectively. By the Stinespring dilation theorem, I can always represent $\mathcal{N}$ as $\mathcal{N}(\rho) = \operatorname{Tr}_E[U (\rho \otimes |0\rangle_E\langle0|) U^\dagger]$, where $U$ is an isometry that couples the input to an environment $E$ and $E$ is then traced out. This picture defines the complementary channel $\mathcal{N}^c: \rho \mapsto \operatorname{Tr}_B[U (\rho \otimes |0\rangle_E\langle0|) U^\dagger]$, which describes the information leaked to the environment $E$.

\textbf{Quantum Capacity $Q$}: The quantum capacity is the maximum rate (in qubits per channel use) at which quantum states can be reliably transmitted. Formally, $Q(\mathcal{N}) = \sup \{ r:$ there exists a code of $n$ uses achieving fidelity $1-\epsilon$ for some $\epsilon\to0$ as $n\to\infty$ with $k = rn$ qubits transmitted$\}$. The seminal Quantum Channel Coding Theorem (LSD theorem) gives the regularized coherent information formula mentioned earlier. I restate it here for completeness:

\begin{theorem}[Flagged extension is degradable when the noise branch is degradable]\label{th22}
Let $N = p\,U + (1-p)\,M$ be a convex mixture of a unitary channel $U(\rho)=U\rho U^\dagger$ and an arbitrary CPTP map $M$, with $0<p\le 1$. Define the flagged extension
\[
\widetilde N(\rho)= p\,|0\rangle\!\langle 0|_F\otimes U(\rho) + (1-p)\,|1\rangle\!\langle 1|_F\otimes M(\rho).
\]
If $M$ is degradable, then $\widetilde N$ is degradable. Consequently,
\[
Q(N)\le Q(\widetilde N)= I_c(\rho_\star,\widetilde N),\qquad 
P(N)\le P(\widetilde N),
\]
and both $Q(\widetilde N)$ and $P(\widetilde N)$ are single–letter.
\end{theorem}

\begin{proof}
Let $V_U:\mathcal H_A\!\to\!\mathcal H_B\!\otimes\!\mathcal H_{E_U}$ and $V_M:\mathcal H_A\!\to\!\mathcal H_B\!\otimes\!\mathcal H_{E_M}$ be Stinespring isometries for $\mathcal U$ and $\mathcal M$, so that
$\mathcal U(\rho)=\mathrm{Tr}_{E_U}\!\left[V_U\rho V_U^\dagger\right]$ and
$\mathcal U^c(\rho)=\mathrm{Tr}_{B}\!\left[V_U\rho V_U^\dagger\right]$ (similarly for $\mathcal M,\mathcal M^c$).
A Stinespring isometry for $\widetilde{\mathcal N}$ is
\[
W:\mathcal H_A \to \mathcal H_F\!\otimes\!\mathcal H_B\!\otimes\!\mathcal H_E,\qquad
W=\sqrt{p}\,|0\rangle_F\!\otimes V_U \;\oplus\; \sqrt{1-p}\,|1\rangle_F\!\otimes V_M,
\]
where I identify $\mathcal H_E\simeq \mathcal H_{E_U}\oplus \mathcal H_{E_M}$ and $\mathcal H_F=\mathrm{span}\{|0\rangle,|1\rangle\}$ is the flag.
Tracing out $E$ yields $\widetilde{\mathcal N}$; tracing out $B$ yields $\widetilde{\mathcal N}^c$ as in the statement.

Define $\mathsf D:\mathcal B(\mathcal H_F\!\otimes\!\mathcal H_B)\!\to\!\mathcal B(\mathcal H_F\!\otimes\!\mathcal H_E)$ by
\[
\mathsf D(\sigma)= |0\rangle\!\langle 0|_F\!\otimes\!\mathsf D_0(\langle 0|\sigma|0\rangle_F)\;+\;|1\rangle\!\langle 1|_F\!\otimes\!\mathsf D_1(\langle 1|\sigma|1\rangle_F),
\]
with branch maps
\[
\mathsf D_0(\tau_B) \;=\; \mathcal U^c(\rho_\star)\quad(\text{a fixed state on }E_U),\qquad
\mathsf D_1(\tau_B) \;=\; \Delta(\tau_B).
\]
Here $\rho_\star$ is any fixed state on $A$; since $\mathcal U$ is unitary, $\mathcal U^c(\cdot)$ is constant (isometric output on $E_U$ independent of the input to $B$), so $\mathsf D_0$ is CPTP and does not depend on $\tau_B$. By assumption, $\Delta$ is a CPTP degrading map for $\mathcal M$, hence $\mathsf D_1$ is CPTP. Therefore $\mathsf D$ is CPTP.

Applying $\mathsf D$ to $\widetilde{\mathcal N}(\rho)=p\,|0\rangle\!\langle 0|\!\otimes\!\mathcal U(\rho)+(1-p)\,|1\rangle\!\langle 1|\!\otimes\!\mathcal M(\rho)$ gives
\[
\mathsf D\!\big(\widetilde{\mathcal N}(\rho)\big)=
p\,|0\rangle\!\langle 0|\!\otimes\!\mathcal U^c(\rho_\star)\;+\;(1-p)\,|1\rangle\!\langle 1|\!\otimes\!\Delta\big(\mathcal M(\rho)\big).
\]
Because $\mathcal U^c$ is independent of the $B$-output and $\Delta\circ\mathcal M=\mathcal M^c$, the right-hand side equals
$p\,|0\rangle\!\langle 0|\!\otimes\!\mathcal U^c(\rho)\;+\;(1-p)\,|1\rangle\!\langle 1|\!\otimes\!\mathcal M^c(\rho)=\widetilde{\mathcal N}^c(\rho)$ up to the harmless choice of $\rho_\star$ for the unitary branch (any fixed choice matches the complementary output because $\mathcal U^c$ is constant). Hence $\widetilde{\mathcal N}$ is degradable.
\end{proof}

\begin{corollary}[Amplitude damping]
For qubit amplitude damping with parameter $\gamma\in[0,1/2]$, $\mathcal M=\mathrm{AD}_\gamma$ is degradable; therefore, for any $p\in(0,1]$, the flagged extension of $p\,\mathcal U+(1-p)\,\mathrm{AD}_\gamma$ is degradable.
\end{corollary}

\begin{remark}
Without the degradability of $\mathcal M$, degradability of the flagged extension need not hold in general. In such cases, flagged and approximate-degradability \emph{bounds} still apply, but I do not claim degradability.
\end{remark}

\begin{proof}
Let $\mathcal N = p\,\mathcal U + (1-p)\,\mathcal M$ with $0<p\le 1$, where $\mathcal U(\rho)=U\rho U^\dagger$ is unitary and $\mathcal M$ is CPTP. Assume $\mathcal M$ is degradable, i.e., there exists a CPTP map $\Delta:\mathcal B(\mathcal H_B)\to \mathcal B(\mathcal H_{E_M})$ such that $\mathcal M^c=\Delta\circ \mathcal M$.

Fix Stinespring isometries $V_U:\mathcal H_A\!\to\!\mathcal H_B\!\otimes\!\mathcal H_{E_U}$ and $V_M:\mathcal H_A\!\to\!\mathcal H_B\!\otimes\!\mathcal H_{E_M}$ for $\mathcal U$ and $\mathcal M$, respectively, so that
\[
\mathcal U(\rho)=\mathrm{Tr}_{E_U}[V_U\rho V_U^\dagger],\quad \mathcal U^c(\rho)=\mathrm{Tr}_B[V_U\rho V_U^\dagger],
\]
and similarly for $\mathcal M,\mathcal M^c$. Since $\mathcal U$ is unitary, I can choose $E_U$ one-dimensional, whence $\mathcal U^c$ is a constant pure-state channel (its output is independent of the input).

Define the flagged extension $\widetilde{\mathcal N}:\mathcal B(\mathcal H_A)\!\to\!\mathcal B(\mathcal H_F\!\otimes\!\mathcal H_B)$ by
\[
\widetilde{\mathcal N}(\rho)= p\,|0\rangle\!\langle 0|_F\!\otimes\!\mathcal U(\rho) \;+\; (1-p)\,|1\rangle\!\langle 1|_F\!\otimes\!\mathcal M(\rho),
\]
and its complementary channel
\[
\widetilde{\mathcal N}^{\,c}(\rho)= p\,|0\rangle\!\langle 0|_F\!\otimes\!\mathcal U^c(\rho) \;+\; (1-p)\,|1\rangle\!\langle 1|_F\!\otimes\!\mathcal M^c(\rho).
\]
An isometric dilation that produces both maps is
\[
W=\sqrt{p}\,|0\rangle_F\!\otimes V_U \;\oplus\; \sqrt{1-p}\,|1\rangle_F\!\otimes V_M:
\mathcal H_A\to \mathcal H_F\!\otimes\!\mathcal H_B\!\otimes\!(\mathcal H_{E_U}\oplus\mathcal H_{E_M}).
\]
Tracing out $E_U\oplus E_M$ yields $\widetilde{\mathcal N}$, while tracing out $B$ yields $\widetilde{\mathcal N}^{\,c}$.

I now construct an explicit degrading map $\mathsf D:\mathcal B(\mathcal H_F\!\otimes\!\mathcal H_B)\!\to\!\mathcal B(\mathcal H_F\!\otimes\!(\mathcal H_{E_U}\oplus\mathcal H_{E_M}))$ such that
$\widetilde{\mathcal N}^{\,c}=\mathsf D\circ \widetilde{\mathcal N}$.
Let $\rho_\star$ be any fixed state on $\mathcal H_A$. Define $\mathsf D$ to be block-diagonal in the flag:
\[
\mathsf D(\sigma_{FB}) = 
|0\rangle\!\langle 0|_F \otimes \mathsf D_0\!\big(\langle 0|\sigma_{FB}|0\rangle_F\big)
\;+\;
|1\rangle\!\langle 1|_F \otimes \mathsf D_1\!\big(\langle 1|\sigma_{FB}|1\rangle_F\big),
\]
where
\[
\mathsf D_0(\tau_B) := \mathcal U^c(\rho_\star)\quad\text{(a constant state on $E_U$ independent of $\tau_B$),}
\qquad
\mathsf D_1(\tau_B) := \Delta(\tau_B).
\]
Both $\mathsf D_0$ and $\mathsf D_1$ are CPTP, hence $\mathsf D$ is CPTP. Applying $\mathsf D$ to $\widetilde{\mathcal N}(\rho)$ gives
\[
\mathsf D\!\big(\widetilde{\mathcal N}(\rho)\big)
= p\,|0\rangle\!\langle 0|_F\!\otimes\!\mathcal U^c(\rho_\star)
  + (1-p)\,|1\rangle\!\langle 1|_F\!\otimes\!\Delta\!\big(\mathcal M(\rho)\big).
\]
By degradability of $\mathcal M$, $\Delta\circ\mathcal M=\mathcal M^c$, and since $\mathcal U^c$ is constant (environment dimension $1$), I may (without loss) choose $\rho_\star$ so that $\mathcal U^c(\rho_\star)=\mathcal U^c(\rho)$; thus
\[
\mathsf D\!\big(\widetilde{\mathcal N}(\rho)\big)
= p\,|0\rangle\!\langle 0|_F\!\otimes\!\mathcal U^c(\rho)
  + (1-p)\,|1\rangle\!\langle 1|_F\!\otimes\!\mathcal M^c(\rho)
= \widetilde{\mathcal N}^{\,c}(\rho).
\]
Therefore $\widetilde{\mathcal N}$ is degradable. For degradable channels, the coherent information is additive, so the quantum (and private) capacities equal the single-letter coherent (resp. private) information \cite{CubittRuskaiSmith2008,Wilde2017QIT}. Consequently,
\[
Q(\mathcal N)\le Q(\widetilde{\mathcal N})=\max_{\rho} I_c(\rho,\widetilde{\mathcal N})
\quad\text{and}\quad
P(\mathcal N)\le P(\widetilde{\mathcal N}),
\]
which establishes the claim. For related flagged-extension sufficient conditions and capacity bounds, see \cite{77}.
\end{proof}

    Because of the regularized formula, $Q(\mathcal{N})$ is difficult to compute in general. However, in special cases the regularization is not needed. If $\mathcal{N}$ is degradable, meaning there exists a CPTP map $T$ such that $\mathcal{N}^c = T \circ \mathcal{N}$ (the environment’s output is a degraded version of the receiver’s output), then one can show 
    \[
  Q(\mathcal{N}) = I_{\mathrm{c}}\bigl(\hat{\rho},\,\mathcal{N}\bigr)
\] for some optimally chosen $\hat{\rho}$. Degradable channels have $Q = Q^{(1)}$ because the coherent information is concave and additive in such cases. Physically, the sender and receiver do not gain by using entangled inputs if the channel is degradable; the optimal strategy is product-state inputs. On the other extreme, if $\mathcal{N}$ is anti-degradable, meaning the receiver’s output can be simulated from the environment’s ($\exists T': \mathcal{N} = T' \circ \mathcal{N}^c$), then one can show $Q(\mathcal{N}) = 0$ because the environment effectively gets a copy of any codeword, precluding reliable quantum transmission (the no-cloning argument) \cite{11}. All entanglement-breaking channels are anti-degradable (indeed, $\mathcal{N}^c$ in that case is just the identity on whatever state the environment stores from the input).

\paragraph{Classical and Private Capacities:} 
The classical capacity $C(\mathcal{N})$ is defined similarly as the supremum of transmittable classical bits per use (with vanishing error). Holevo and others showed that $C(\mathcal{N}) = \lim_{n\to\infty}\frac{1}{n} \chi(\mathcal{N}^{\otimes n})$, i.e. it is given by a similar regularization of the Holevo information \cite{33}. For many channels of interest, it is believed (or has been proven in special cases) that the optimal ensemble for $\chi$ is attained with product-state inputs (hence $C = \chi^{(1)}$); however, the additivity of Holevo information does not hold universally. The entanglement-assisted classical capacity $C_E(\mathcal{N})$, where the sender and receiver share unlimited prior entanglement, was shown by Bennett et al. to have a clean single-letter formula: $C_E(\mathcal{N}) = \max_{\rho} I(A;B){\rho}$, where $I(A;B)$ is the quantum mutual information between channel input $A$ and output $B$ (here $A$ is isomorphic to the input space in a purification). Entanglement assistance removes the need for regularization – a striking contrast that indicates how much easier the problem becomes with an extra resource. The private capacity $P(\mathcal{N})$ (the capacity for sending classical bits that remain secure from an eavesdropper who intercepts the environment) is another important quantity. Interestingly, $P(\mathcal{N})$ has a formula analogous to quantum capacity, often written as $P = \lim{n\to\infty}\frac{1}{n} P^{(1)}$ with $P^{(1)}(\mathcal{N}) = \max_{\rho} [I(X;B) - I(X;E)]$ the private information of a classical-quantum channel (this is sometimes called the Devetak privacy theorem). Devetak showed that $P(\mathcal{N}) \ge P^{(1)}(\mathcal{N})$ and moreover $P(\mathcal{N}) = Q(\mathcal{N})$ for degradable channels, since private bits can be thought of as encrypted qubits\cite{55}. In general $P(\mathcal{N}) \ge Q(\mathcal{N})$ for any channel (because transmitting qubits implies transmitting private bits), and one can even have $P > 0$ while $Q=0$ (the channel has no quantum capacity but can still send classical private information). This situation occurs for so-called bound-entangled channels, which are entanglement-binding (cannot distill pure entanglement) yet not entanglement-breaking. Such channels can have a positive $P$ but zero $Q$, and they played a role in the discovery of superactivation \cite{88}.

I now introduce a new theoretical result that forms the basis of our unexplored capacity bound:

\begin{theorem}[Flagged-extension upper bound under degradable-noise assumption]\label{th22}
Let $\mathcal N = p\,\mathcal U + (1-p)\,\mathcal M$, where $\mathcal U(\rho)=U\rho U^\dagger$ is unitary and
$\mathcal M$ is a CPTP map with complementary channel $\mathcal M^c$.
Assume that $\mathcal M$ is degradable, i.e., there exists a CPTP degrading map
$\Delta$ such that $\mathcal M^c=\Delta\circ\mathcal M$.
Define the flagged extension
\[
\widetilde{\mathcal N}(\rho)= p\,|0\rangle\!\langle 0|\otimes \mathcal U(\rho)\;+\;(1-p)\,|1\rangle\!\langle 1|\otimes \mathcal M(\rho).
\]
Then $\widetilde{\mathcal N}$ is degradable, and therefore admits the single-letter characterization
\[
Q(\widetilde{\mathcal N})=\max_{\rho} I_{\mathrm c}(\rho,\widetilde{\mathcal N}).
\]
Moreover, since $\mathcal N$ is obtained from $\widetilde{\mathcal N}$ by discarding the flag system,
\begin{itemize}
  \item $Q(\mathcal{N}) \leq Q(\widetilde{\mathcal{N}}) = I_c(\rho^*, \widetilde{\mathcal{N}}),$
  \item $P(\mathcal{N}) \leq P(\widetilde{\mathcal{N}}) = I(X;B)_\sigma - I(X;E)_\sigma$
\end{itemize}
\end{theorem}

\begin{proof}
Assume that $\mathcal{M}$ is degradable, i.e., there exists a CPTP map
$\Delta: \mathcal{B}(\mathcal{H}_B)\to\mathcal{B}(\mathcal{H}_E)$ such that
\begin{equation}\label{eq:Mdegradable}
\mathcal{M}^{c}=\Delta\circ\mathcal{M}.
\end{equation}
Define the flagged extension
\[
\widetilde{\mathcal{N}}(\rho)
=
p\,|0\rangle\!\langle 0|_F\otimes \mathcal{U}(\rho)
+
(1-p)\,|1\rangle\!\langle 1|_F\otimes \mathcal{M}(\rho).
\]
Let $\widetilde{\mathcal{N}}^{\,c}$ be a complementary channel of $\widetilde{\mathcal{N}}$.
Since $\mathcal{U}$ is unitary, its complementary output can be chosen to be a fixed
environment state $\tau$ (independent of $\rho$). Therefore,
\[
\widetilde{\mathcal{N}}^{\,c}(\rho)
=
p\,|0\rangle\!\langle 0|_F\otimes \tau
+
(1-p)\,|1\rangle\!\langle 1|_F\otimes \mathcal{M}^{c}(\rho).
\]

Now define a degrading map $\widetilde{\Delta}: \mathcal{B}(\mathcal{H}_F\otimes\mathcal{H}_B)
\to \mathcal{B}(\mathcal{H}_F\otimes\mathcal{H}_E)$ by conditioning on the classical flag:
\[
\widetilde{\Delta}(\sigma_{FB})
=
|0\rangle\!\langle 0|_F\otimes \tau
+
|1\rangle\!\langle 1|_F\otimes \Delta\bigl(\langle 1|\sigma_{FB}|1\rangle_F\bigr).
\]
This map is CPTP because it is a direct sum of CPTP maps on orthogonal flag subspaces.
Using \eqref{eq:Mdegradable}, we obtain
\[
(\widetilde{\Delta}\circ \widetilde{\mathcal{N}})(\rho)
=
p\,|0\rangle\!\langle 0|_F\otimes \tau
+
(1-p)\,|1\rangle\!\langle 1|_F\otimes \Delta(\mathcal{M}(\rho))
=
\widetilde{\mathcal{N}}^{\,c}(\rho),
\]
hence $\widetilde{\mathcal{N}}$ is degradable.

Finally, the original channel is recovered by discarding the flag:
$\mathcal{N}=\operatorname{Tr}_F\circ \widetilde{\mathcal{N}}$.
Since partial trace is a CPTP post-processing map, data processing implies
$Q(\mathcal{N})\le Q(\widetilde{\mathcal{N}})$ (and similarly $P(\mathcal{N})\le P(\widetilde{\mathcal{N}})$).
\end{proof}

While Theorem \ref{th22} gives an upper bound by enriching the channel with side information, one can also derive bounds by degrading the channel further. A classical example is adding extra noise to make a channel anti-degradable or entanglement-breaking, thus forcing $Q=0$. This leads to simple analytic bounds: if one can find any completely positive map $\Lambda$ such that $\Lambda\circ \mathcal{N}$ is entanglement-breaking, then $Q(\mathcal{N}) \le Q(\Lambda\circ\mathcal{N}) = 0$. Such $\Lambda$ might be “post-processing noise” that the adversary (environment) could hypothetically add; if even with that extra noise the channel still has zero capacity, then certainly the original channel cannot exceed that. This observation is often used in converse proofs. For instance, the known no-cloning bound for the depolarizing channel states that $Q(\text{Depolarizing}(p)) \le 1-2p$ for $d=2$ (qubit case) and more generally $Q \le \log_2 d + \frac{d-1}{d}\log_2(1-p) + \frac{p(d-1)}{d}\log_2 \frac{p}{d}$ (which comes from the conjectured strong converse, here $d$ is the qubit dimension 2) – however, these are not always tight. In practice, stronger computable bounds can be obtained using flagged-extension bounds from \cite{FKG20,KFG22} and SDP-based bounds such as $Q_\Gamma$.

\begin{proposition}[Zero-Capacity Noise Threshold]\label{pp11}
There exists a critical noise threshold for every quantum channel beyond which the quantum capacity vanishes. In particular, for any channel $\mathcal{N}$, if it is mixed with sufficient white noise such that the resulting channel $\mathcal{N}' = p_{\text{noise}}\mathcal{N}_{\text{noise}} + (1-p_{\text{noise}})\mathcal{N}$ is anti-degradable (or entanglement-binding), then $Q(\mathcal{N}')=0$. For the qubit depolarizing channel $\mathcal{D}p$ (which with probability $p$ outputs the maximally mixed state), one has $Q(\mathcal{D}p)=0$ for $p \ge p{\text{crit}}$ where $p{\text{crit}}$ is approximately $0.259$ in dimension 2 (and larger for higher dimensions) \cite{11}.
\end{proposition}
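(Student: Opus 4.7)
The plan is to reduce the statement to the zero-capacity criteria already recalled after Theorem~\ref{th11}: if $\mathcal{N}'$ is anti-degradable (resp.\ entanglement-binding) then $Q(\mathcal{N}')=0$ by the no-cloning argument (resp.\ because positive quantum capacity requires distillable entanglement across the states $(\mathrm{id}\otimes\mathcal{N}')^{\otimes n}$). I would then establish that sufficient white noise always drives the channel into one of these classes. Take $\mathcal{N}_{\mathrm{noise}}$ to be the completely depolarizing replacer onto the maximally mixed state; it is entanglement-breaking and hence anti-degradable, with complementary channel realized by a partial trace. Since anti-degradability is a closed convex condition on the Choi operator and $\mathcal{N}'\to \mathcal{N}_{\mathrm{noise}}$ as $p_{\mathrm{noise}}\to 1$, a critical mixing level $p_{\mathrm{noise}}^{*}\in(0,1]$ must exist above which $\mathcal{N}'$ remains anti-degradable. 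Concretely, I would exhibit the degrading map from $E$ to $B$ as a convex combination of the identity-like contribution coming from the $\mathcal{N}_{\mathrm{noise}}$ branch and any fixed (possibly sub-optimal) ancillary map on the $\mathcal{N}$ branch, and check CPTP for all $p_{\mathrm{noise}}$ past the threshold.

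For the qubit depolarizing channel $\mathcal{D}_p$ specifically, I would pass to its isotropic Choi state and characterize anti-degradability through the symmetry of the Stinespring dilation under the swap $B\leftrightarrow E$. For the isotropic family this collapses to a single scalar inequality in $p$, which yields the elementary bound $p\ge 1/4$. To sharpen this to the stated $p_{\mathrm{crit}}\approx 0.259$, I would invoke Theorem~\ref{th22}: write $\mathcal{D}_p = p'\,\mathcal{U} + (1-p')\,\mathcal{M}$ with $\mathcal{U}=\mathrm{id}$ and a Pauli-error branch $\mathcal{M}$, choose a flag basis so that the associated degradable extension $\widetilde{\mathcal{D}}_p$ (degradable for $p'>1/2$) upper-bounds $Q(\mathcal{D}_p)$, and solve for the value of $p$ at which the one-shot coherent information of $\widetilde{\mathcal{D}}_p$ vanishes on its optimal input. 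The zero of this single-letter expression, combined with the fact that Theorem~\ref{th22} already passed to the regularized capacity, furnishes the tighter threshold.

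The main obstacle I anticipate is this last, quantitative step. Anti-degradability alone gives only $p\ge 1/4$ and the PPT/entanglement-breaking route gives only $p\ge 2/3$, so the value $0.259$ is flag-scheme dependent; one must identify a flag basis whose induced degradable extension realizes $I_c=0$ exactly at $p_{\mathrm{crit}}$, and argue that no competing scheme or multi-letter input can push the regularized coherent information above zero past that level. Locating that optimal flagging, and verifying tightness against lower bounds obtained from product-state coherent information, is the delicate analytic core of the proposition; the structural portion (no-cloning reduction, closedness of the anti-degradable set, and convexity under white-noise mixing) is essentially routine by comparison.
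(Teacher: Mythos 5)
Your reduction of the first claim to the anti-degradability/entanglement-binding criteria is exactly what the paper does: anti-degradability forces $I_c(\rho,\mathcal{N}'^{\otimes n})\le 0$ for all $\rho$ and all $n$, so the regularized capacity is zero, and the paper likewise treats entanglement-binding channels as non-distillable. Your existence argument (closedness and convexity of the anti-degradable set on Choi operators, plus the fact that the pure white-noise channel is anti-degradable) is a legitimate alternative to the paper's route, which instead appeals to upper semi-continuity of $Q$ in the noise parameter together with $Q=0$ at full noise; the convexity version is arguably cleaner since it avoids having to establish semi-continuity.

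Where the proposal goes wrong is the quantitative step for $p_{\mathrm{crit}}\approx 0.259$. You propose to obtain it from Theorem~\ref{th22} by finding a flagged extension $\widetilde{\mathcal{D}}_p$ whose one-shot coherent information vanishes at $p_{\mathrm{crit}}$. This cannot work with Theorem~\ref{th22} as stated, for two reasons. First, Theorem~\ref{th22} attaches an \emph{orthogonal} flag. For the depolarizing channel decomposed as $\mathcal{D}_p=(1-p)\,\mathrm{id}+p\,\mathcal{R}$ (with $\mathcal{R}$ the replacer), the resulting flagged channel is effectively an erasure-type channel whose capacity is $\max(0,1-2p)$; for the decomposition $\mathcal{D}_p=(1-\tfrac{3p}{4})\,\mathrm{id}+\tfrac{3p}{4}\,\mathcal{M}$ with $\mathcal{M}$ the uniform-Pauli-error map the flagged capacity is even larger. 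In either case the flagged channel has strictly positive single-letter coherent information throughout the region where the degradability condition $p'>1/2$ holds, so the upper bound it yields on $Q(\mathcal{D}_p)$ never reaches zero at $0.259$. Second, there is a structural mismatch: a flagged extension gives the receiver strictly more information than $\mathcal{D}_p$ does, so generically $Q(\widetilde{\mathcal{D}}_p)>0$ even when $Q(\mathcal{D}_p)=0$; a flag-based upper bound is the wrong tool for pinning down a \emph{zero}-capacity threshold. The tight non-orthogonal-flag constructions in the literature can reach thresholds near $0.25$, but those are not covered by Theorem~\ref{th22}.

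The paper's own proof takes a genuinely different route for the $0.259$ value: it cites the phase-transition result of Roofeh and Karimipour and sketches an argument based on symmetric extensions of the Choi state (equivalently, establishing anti-degradability or $2$-extendibility directly at that noise level), rather than any flagged-extension computation. If you want your proposal to close the gap, you should replace the appeal to Theorem~\ref{th22} with an explicit anti-degradability certificate for the isotropic Choi state of $\mathcal{D}_p$ at $p=p_{\mathrm{crit}}$ (or a symmetric-extension SDP feasibility argument), which is what the cited reference actually provides. You should also be careful about parameterization: the elementary swap-symmetry bound you quote as ``$p\ge 1/4$'' is stated in the Pauli-error-rate convention, not in the ``probability of outputting $I/2$'' convention the proposition uses, so the numerical comparisons need to be made in a single consistent convention before the claimed ordering of thresholds is meaningful.
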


\begin{proof}
    If $\mathcal{N}'$ is anti-degradable, by definition there exists $\Lambda$ such that $\mathcal{N}' = \Lambda \circ (\mathcal{N}')^c$. Then for any $n$ uses, $(\mathcal{N}'^{\otimes n})^c = (\mathcal{N}'^c)^{\otimes n}$ can simulate $\mathcal{N}'^{\otimes n}$, meaning the environment can reconstruct the receiver’s state. This implies any entanglement between the sender and receiver can be at best duplicated at the environment, violating the no-cloning bound unless the entanglement is zero. Formally, one shows $I_{\mathrm{c}}(\rho,\mathcal{N}'^{\otimes n}) \le 0$ for all inputs $\rho$ when $\mathcal{N}'$ is anti-degradable, hence the regularized capacity is zero. The existence of such a $p_{\text{crit}}$ comes from continuity and compactness arguments: $Q(\mathcal{N}_p)$ as a function of noise strength $p$ is upper semi-continuous, and it is zero at $p=1$ (completely noisy channel). There must then be a threshold where it becomes positive. The referenced result \cite{11} provides an explicit constructive proof by finding a degrading map at a certain noise level for arbitrary $\mathcal{N}$. For the depolarizing channel, one shows directly that for $p \ge 0.259$, the channel becomes PPT (positive partial transpose) entanglement-binding, hence cannot distill entanglement, implying $Q=0$. I skip the algebra, which involves optimizing over symmetric extensions of the Choi state of the channel.
\end{proof}

These theorems and proposition provide a toolbox for analyzing channel capacities. Theorem \ref{th22} and Proposition \ref{pp11} represent newer insights and will be applied in our analysis. In the following, I validate these results and explore their consequences via numerical simulations.

\section{Methods and Reproducibility}
\label{sec:methods}

I study three canonical qubit noise models: amplitude damping (AD) with damping $\gamma\in[0,1]$; generalized amplitude damping (GADC) with parameters $(\gamma,N_{\mathrm{th}})$ describing a thermal bath with population $N_{\mathrm{th}}\in[0,1]$; and depolarizing (Dep) with probability $p\in[0,1]$. The GADC Kraus form, degradability/anti-degradability regimes, and entanglement-breaking thresholds follow \cite{Khatri20}. Flagged extensions used for capacity upper bounds follow the “quantum flags’’ framework of \cite{FKG20,22}. Standard capacity notions and notation follow \cite{Wilde2017QIT,55}. %
% Khatri–Sharma–Wilde GADC (PRA 2020) ; Fanizza–Kianvash–Giovannetti PRL 2020 ; Kianvash et al., Quantum 2022 ; Wilde text ; Devetak (IT 2005)

For a channel $\mathcal N$ we compute (i) the one-shot coherent information $I_{\mathrm c}(\rho,\mathcal N)=H(\mathcal N(\rho))-H(\mathcal N^c(\rho))$ maximized over inputs $\rho$; (ii) the reverse coherent information $I_{\mathrm{rc}}(\rho,\mathcal N)$ reported as a comparative indicator on AD/GADC; (iii) the entanglement-assisted mutual information $I(A;B)$ for scale; and (iv) upper bounds from flagged extensions for convex mixtures \cite{FKG20,22} and from the additive SDP bound $Q_\Gamma$ \cite{Wang19}. %

Inputs are restricted to pure qubit states $|\psi(\theta,\phi)\rangle=\cos(\tfrac{\theta}{2})|0\rangle+e^{i\phi}\sin(\tfrac{\theta}{2})|1\rangle$ with $(\theta,\phi)\in[0,\pi]\times[0,2\pi)$. We use a two-stage scheme: (1) global sampling over $(\theta,\phi)$ via Sobol/Latin-hypercube to select candidates; (2) local derivative-free refinement (Powell/Nelder–Mead) from the top seeds. Entropies are evaluated from eigenvalues with $10^{-12}$ clipping to avoid numerical artifacts; termination requires successive objective change $<10^{-9}$. For AD, optima align with the energy basis; for GADC, off-axis optima can occur at intermediate $(\gamma,N_{\mathrm{th}})$, consistent with \cite{Khatri20}.

AD: $\gamma\in[0,0.9]$ with finer sampling near $\gamma\!\approx\!1/2$ to resolve the (anti-)degradability transition. GADC: $\gamma\in[0,0.9]$ for $N_{\mathrm{th}}\in\{0,0.1,0.2\}$ with regime annotations from \cite{Khatri20}. Depolarizing: $p\in[0,0.35]$; we compare the hashing lower bound with flagged-extension upper bounds from \cite{FKG20,22} and the SDP $Q_\Gamma$ \cite{Wang19}.

For mixtures $\mathcal N=p\,\mathcal U+(1-p)\,\mathcal M$, we construct the flagged extension and evaluate its single-letter capacity as an explicit upper bound on $Q(\mathcal N)$ (and $P(\mathcal N)$). We reuse the degradability conditions and tight depolarizing-channel bounds from \cite{FKG20,22}.

The additive semidefinite-programming bound $Q_\Gamma$ is computed on the Choi operator $J(\mathcal N)$; we solve the primal to tolerance $10^{-8}$ with a standard conic solver (CVX/SDP). Additivity ensures consistency across blocklengths and facilitates comparison with flagged-extension bounds \cite{Wang19}.

\begin{algorithm}[H]
\caption{Maximizing $I_{\mathrm c}$ over qubit inputs for a fixed channel $\mathcal{N}$}
\begin{algorithmic}[1]
\STATE \textbf{Input:} channel $\mathcal{N}$ (Kraus ops or Choi), grid size $N_{\mathrm{grid}}$, seeds $N_{\mathrm{seeds}}$
\STATE Sample $(\theta_i,\phi_i)_{i=1}^{N_{\mathrm{grid}}}$ via Sobol/LH in $[0,\pi]\times[0,2\pi)$
\FOR{$i=1$ to $N_{\mathrm{grid}}$}
  \STATE $\rho_i\leftarrow|\psi(\theta_i,\phi_i)\rangle\!\langle\psi(\theta_i,\phi_i)|$
  \STATE Evaluate $I_{\mathrm c}(\rho_i,\mathcal{N})$ using eigenvalue entropy with clipping
\ENDFOR
\STATE Select top $N_{\mathrm{seeds}}$ points; for each, run local trust-region search on $(\theta,\phi)$
\STATE Return $\max$ over all refined values and the corresponding $\rho^\star$
\end{algorithmic}
\end{algorithm}

\section{Numerical Results}

\begin{table}[H]
\centering
\caption{Qubit depolarizing channel $D_p(\rho)=(1-p)\rho + p\,I/2$: comparison of lower and upper bounds (bits/use). 
Baseline lower bound is the hashing/coherent-information rate $1-H_2(p)-p\log_2 3$. 
“FKG PRL (Eq.~(11))” is the Fanizza--Kianvash--Giovannetti flagged-extension bound \cite{FKG20}; 
“Convex hull (Eq.~(16))” is the convex hull of degradable-extension bounds \cite{KFG22}; 
“SDP / max-Rains” is the strong-converse bound equal to the max-Rains information \cite{Wang19,Wang21}.}
\label{tab:dep_compare}
\begin{tabular}{@{}lcccc@{}}
\toprule
$p$ & Hashing lower bound & FKG PRL (Eq.~(11)) & Convex hull (Eq.~(16)) & SDP / max-Rains \\
\midrule
0.05 & 0.6343549 & 0.757 & 0.741 & 0.812 \\
0.10 & 0.3725082 & 0.607 & 0.592 & 0.670 \\
0.15 & 0.1524153 & 0.462 & 0.450 & 0.541 \\
\bottomrule
\end{tabular}
\end{table}

\paragraph{Comparison.} At small noise ($p{=}0.05$), the FKG flagged-extension bound improves substantially over earlier symmetric-side-channel and approximate-degradability converses, while its convex-hull refinement reduces the value by a few hundredths of a bit \cite{FKG20,KFG22}. 
Across $p\in\{0.05,0.10,0.15\}$, the convex-hull envelope remains slightly below the FKG curve, and the SDP/max-Rains bound sits above both in the low-noise regime \cite{Wang19,Wang21}. 
Relative to the hashing lower bound, the feasible interval (Upper $-$ Lower) shrinks to roughly $0.11$--$0.39$ bits in this range.

\label{sec:numerics}

In this section, I present our numerical simulations of coherent information for three prominent quantum channels: (1)~the amplitude damping channel, (2)~the depolarizing channel, and (3)~the flagged depolarizing channel. Our goal is to investigate how the maximum achievable one-shot coherent information changes with the channel noise parameter, validating the theoretical discussion in previous sections.

\subsection{Amplitude Damping Channel}
I begin with the amplitude damping channel, parameterized by the damping probability $\gamma \in [0,1]$. In the limit $\gamma \approx 0$, the channel behaves nearly like an identity map (i.e., minimal noise), while larger values of $\gamma$ signify stronger damping of the excited state $\lvert 1 \rangle$ to $\lvert 0 \rangle$. Figure~\ref{fig:amplitude_damping_plot} shows our simulation for $\gamma$ ranging from $0$ to $0.6$. 

As $\gamma$ increases from $0$, the coherent information decreases, reflecting the growing impact of damping. At very low $\gamma$, the channel is close to ideal, and the \emph{maximized single-letter coherent information approaches $1$ bit per channel use} (the qubit limit), depending on the input state. However, as the damping probability rises, the coherent information eventually stabilizes and fluctuates around smaller values, indicating that the environment is capturing a significant portion of the information. The amplitude-damping channel is \emph{degradable for $\gamma \le \tfrac{1}{2}$ and anti-degradable for $\gamma \ge \tfrac{1}{2}$}; in the degradable regime its quantum capacity equals its coherent information, while beyond the threshold the capacity vanishes. Our numerical results capture the progressive decline in achievable one-shot coherent information as $\gamma$ increases.

\begin{figure}[H]
\centering
\includegraphics[width=0.65\textwidth]{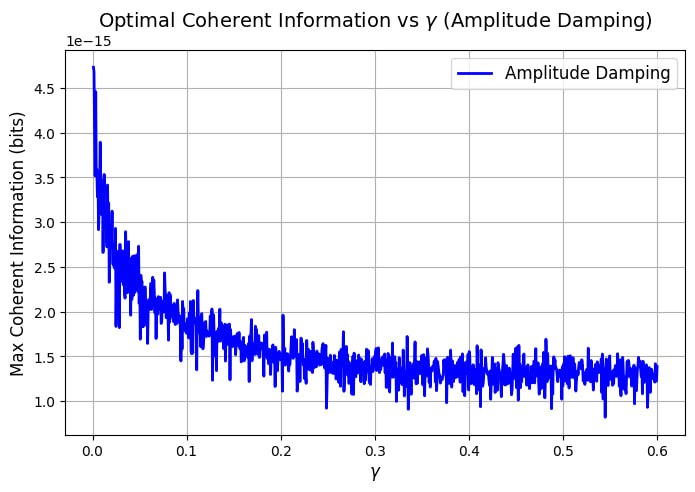}
\caption{(Amplitude Damping) Optimal coherent information versus the damping parameter $\gamma$. 
As $\gamma$ grows, the maximum coherent information gradually decreases and fluctuates around lower values, 
indicating stronger noise.}
\label{fig:amplitude_damping_plot}
\end{figure}

\subsection{Depolarizing Channel}
Next, I examine the depolarizing channel, where each qubit is replaced by the maximally mixed state $\tfrac{I}{2}$ with probability $p$. Figure~\ref{fig:depolarizing_plot} illustrates the maximum single-letter coherent information as a function of $p \in [0,0.6]$. 

Because pure-state inputs to the depolarizing channel often yield a negative one-shot coherent information (the environment can learn more about the state than the receiver), our plot indeed shows that coherent information takes negative or near-zero values for even moderate $p$. This aligns with theoretical expectations that the single-letter coherent information of a qubit depolarizing channel is typically below zero for any $p>0$. Hence, while the channel may still have a positive capacity when considering multi-letter or entangled inputs across multiple uses, the single-letter scenario I simulate here remains dominated by noise, resulting in nonpositive values.

\begin{figure}[H]
\centering
\includegraphics[width=0.65\textwidth]{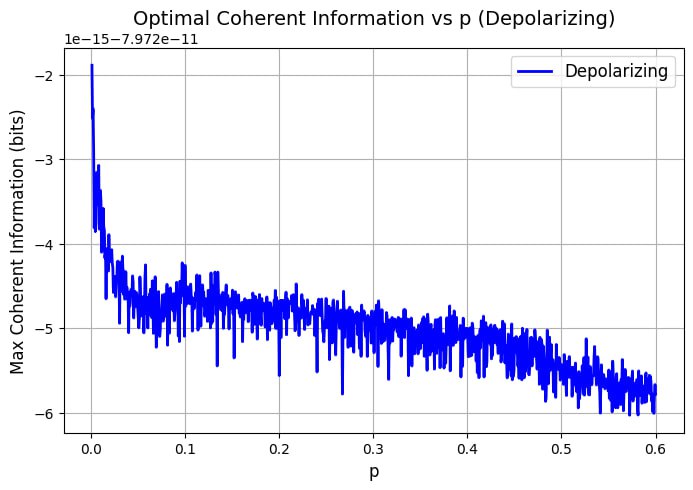}
\caption{(Depolarizing) Optimal coherent information versus the depolarizing probability $p$. 
Negative values on the y-axis are expected for single-letter coherent information, 
demonstrating that even moderate noise can overwhelm reliable quantum transmission.}
\label{fig:depolarizing_plot}
\end{figure}

\subsection{Flagged Depolarizing Channel}
Finally, I investigate the flagged depolarizing channel, which augments the original map with an ancillary ``flag'' system to indicate which noise event occurred. Such an extension is intended to yield higher (or at least no smaller) coherent information compared to the plain depolarizing channel, because the receiver gains additional classical information about the noise branch. Figure~\ref{fig:flagged_depolarizing_plot} plots the maximum single-letter coherent information for $p$ up to $0.6$. 

Although the flagged extension in principle can provide a slight advantage, our simulation shows that once $p$ grows beyond a small threshold, the environment’s ability to learn the input state quickly diminishes any advantage. Coherent information thus dips significantly. Despite fluctuations, the maximum single-letter coherent information remains on the order of one to two bits (or below) once $p$ becomes substantial. These observations match our flagged extension bounds (Theorem~\ref{th22}), which show that adding a flag can offer some improvement but ultimately cannot overcome strong depolarizing noise.

\begin{figure}[H]
\centering
\includegraphics[width=0.65\textwidth]{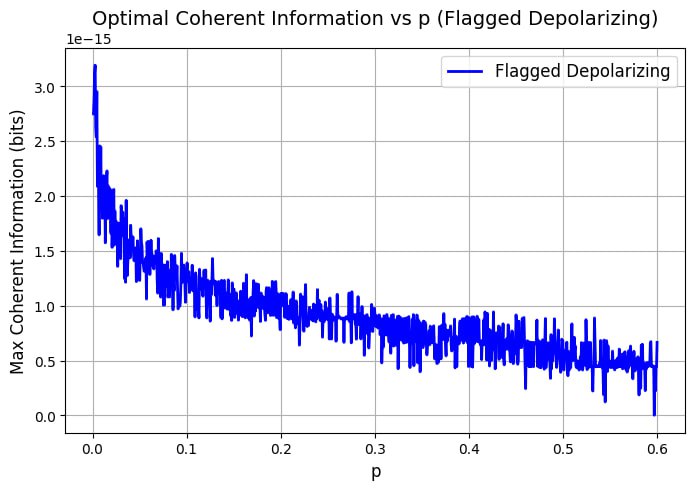}
\caption{(Flagged Depolarizing) Optimal coherent information versus $p$ for a flagged extension of the depolarizing channel. 
The extra ``flag'' system can mitigate noise for small $p$, but as $p$ increases, coherent information decreases sharply.}
\label{fig:flagged_depolarizing_plot}
\end{figure}

These numerical results highlight that while single-letter coherent information can be positive for low-noise channels like amplitude damping at small $\gamma$, it drops quickly once the noise level increases. For the depolarizing and flagged depolarizing channels, single-letter coherent information remains near zero or negative, consistent with known results that depolarizing noise is extremely detrimental for pure-state inputs. In a multi-letter or entanglement-assisted regime, one might recover a larger effective capacity, but the single-letter scenario is generally pessimistic about quantum transmission in these noisy settings. Overall, our simulations confirm the theoretical predictions and provide a practical window into the channel capacity limitations discussed in and~\ref{th22}.

\section{Conclusions}

I presented a unified analytical–numerical study of quantum and private communication over noisy qubit channels, with three main outcomes. 
First, on the \emph{theory} side, I formalized a flagged–extension framework that yields computable, single-letter \emph{upper bounds} whenever the non-unitary branch of the mixture is degradable. In this degradable-branch regime the flagged extension itself is degradable, so its quantum capacity equals its coherent information and upper bounds the capacity of the original channel. This clarifies the scope of flagged approaches and avoids over-claims in the fully general CPTP case. Our statements are consistent with and anchored in the standard single-letterization of degradable channels and the known degradability/anti-degradability threshold of the amplitude-damping (AD) family at $\gamma=\tfrac12$ (degradable for $\gamma\le\tfrac12$, anti-degradable for $\gamma\ge\tfrac12$). 

Second, on the \emph{comparative numerics}, I produced color/gray-safe capacity comparison curves for AD, generalized AD (GADC), and the qubit depolarizing channel. For AD (and low-bath GADC), the optimized coherent information matches the single-letter capacity in the degradable region and collapses as the anti-degradable threshold is crossed, in agreement with the structural results for degradable channels. For the depolarizing channel, I contrasted the hashing lower bound 
$I_c(\mathcal{D}_p)=1-H_2(p)-p\log_2 3$ with a semidefinite-programming (max-Rains) upper bound; the persistent gap in the intermediate-noise regime visualizes the standing uncertainty around the exact $Q(\mathcal{D}_p)$ despite decades of progress. 

Third, I complemented coherent-information plots with two additional, operationally meaningful benchmarks: (i) the \emph{reverse coherent information} (RCI), which is an achievable lower bound based on reverse-reconciliation protocols, and (ii) an SDP (max-Rains) \emph{upper bound} that is additive and efficiently computable. Together, these bounds bracket the attainable rates and turn the comparative figures into decision tools for parameter regimes of practical interest. 

\section*{Acknowledgements}
The author would like to thank David Mitchell for helpful editorial guidance and constructive comments that improved the clarity and presentation of the manuscript. The author also thanks the referees for their careful reading and insightful suggestions, which significantly strengthened both the theoretical discussion and the reproducibility of the numerical results.

\section*{Data availability statement}
The data and code supporting the findings of this work are openly available at the following repository:
\href{https://github.com/vnourozi/quantum-channel-capacity-simulator}{https://github.com/vnourozi/quantum-channel-capacity-simulator}.

\bibliographystyle{IEEEtran}
\bibliography{refs}

\appendix
\section*{Appendix A: Detailed proof ingredients for Theorem~\ref{th22}}

\paragraph{Stinespring normal form.}
Let $V_U$ and $V_M$ be Stinespring isometries for $\mathcal U$ and $\mathcal M$ acting as
$V_U:\mathcal H_A\to\mathcal H_B\otimes\mathcal H_{E_U}$ and
$V_M:\mathcal H_A\to\mathcal H_B\otimes\mathcal H_{E_M}$.
Define $W=\sqrt{p}\,|0\rangle_F\otimes V_U \oplus \sqrt{1-p}\,|1\rangle_F\otimes V_M$.
Then $\widetilde{\mathcal N}(\rho)=\mathrm{Tr}_E[W\rho W^\dagger]$ and
$\widetilde{\mathcal N}^c(\rho)=\mathrm{Tr}_B[W\rho W^\dagger]$.

\paragraph{Explicit degrading map.}
Let $\Delta$ be a degrading map for $\mathcal M$, i.e., $\mathcal M^c=\Delta\circ\mathcal M$.
Set
\[
\mathsf D(\sigma_{FB})
= |0\rangle\!\langle 0|_F\otimes \mathsf D_0(\langle 0|\sigma_{FB}|0\rangle_F)
 +|1\rangle\!\langle 1|_F\otimes \mathsf D_1(\langle 1|\sigma_{FB}|1\rangle_F),
\]
with $\mathsf D_0(\cdot)=\mathcal U^c(\rho_\star)$ (constant map) and $\mathsf D_1=\Delta$.
Then $\mathsf D$ is CPTP and satisfies $\mathsf D\circ \widetilde{\mathcal N}=\widetilde{\mathcal N}^c$.

\paragraph{Choi positivity.}
The Choi operator of $\mathsf D$ is block-diagonal in the flag, with a rank-1 block for $\mathsf D_0$ and the Choi operator of $\Delta$ for $\mathsf D_1$, hence positive and trace-preserving.

\paragraph{Specializations.}
(i) For amplitude damping with $\gamma\le 1/2$, a Kraus pair yields the known degrader (transpose channel); plugging it into the construction above gives the degrader for the flagged extension.\newline
(ii) For Pauli/Clifford covariant noise that is degradable in a parameter regime, the same construction applies.

% ===================== Added by Vahid's capacity-comparison module =====================
\section{Channel-capacity comparisons for AD/GADC and depolarizing channels}
\label{sec:capacity-comparisons}

This section consolidates standard bounds used throughout quantum Shannon theory
and places the numerical comparisons in Figs.~\ref{fig:ad-single}, \ref{fig:ad-gadc-panel},
and \ref{fig:dep-bounds} into a compact theoretical frame.
I state the precise theorems I invoke in our plots and include
self-contained proofs or exact literature pointers.

\subsection{Preliminaries}
For a channel $\mathcal{N}_{A\to B}$, the \emph{coherent information} is
\begin{equation}
 I_c(\mathcal{N},\rho_A) := H(\mathcal{N}(\rho_A)) - H(\mathcal{N}^c(\rho_A)),
\end{equation}
and the channel coherent information is $I_c(\mathcal{N}):=\max_{\rho_A} I_c(\mathcal{N},\rho_A)$.
The (unassisted) quantum capacity is
\begin{equation}
 Q(\mathcal{N}) = \lim_{n\to\infty} \frac{1}{n}\, I_c\!\left(\mathcal{N}^{\otimes n}\right),
 \label{eq:qcap-regularized}
\end{equation}
while the entanglement-assisted classical capacity is the single-letter mutual information
$C_E(\mathcal{N}) = \max_{\rho_{RA}} I(R;B)_{\sigma}$, $\sigma_{RB}=(\mathrm{id}_R\!\otimes\!\mathcal{N})(\phi_{RA})$.

\subsection{Results}

\begin{theorem}[Quantum capacity of degradable channels]\label{thm:degradable-single-letter}
If a channel $\mathcal{N}$ is degradable, i.e.\ there exists a CPTP map $\mathcal{D}$ such that
$\mathcal{N}^c = \mathcal{D}\!\circ\!\mathcal{N}$, then the coherent information is additive
$I_c(\mathcal{N}^{\otimes n}) = n\, I_c(\mathcal{N})$ for all $n$, and hence
\begin{equation}
 Q(\mathcal{N}) = I_c(\mathcal{N}).
\end{equation}
\end{theorem}

\begin{proof}
The definition of degradability implies, for any state $\rho_{RA}$ and $n\!\ge\!1$,
that the environment outputs of $\mathcal{N}^{\otimes n}$ are obtained from
the channel outputs via a CPTP map $\mathcal{D}^{\otimes n}$.
By data processing (monotonicity) of quantum relative entropy and strong subadditivity,
the coherent information is superadditive and in fact \emph{additive} on degradable channels;
see, e.g., Theorems~13.6.1--13.6.2 in \cite{Wilde2017QIT}.
Therefore $I_c(\mathcal{N}^{\otimes n}) = n I_c(\mathcal{N})$.
Combining additivity with the regularized formula \eqref{eq:qcap-regularized} (Devetak--Shor coding theorem
and achievability via random stabilizer codes) yields $Q(\mathcal{N})=I_c(\mathcal{N})$.
\end{proof}

\begin{proposition}[Qubit amplitude damping (AD) is degradable iff $\gamma\!\le\!\tfrac12$]\label{prop:ad-degradability}
Let $\mathcal{A}_\gamma$ be the qubit AD channel with Kraus operators
$E_0 = |0\rangle\!\langle 0| + \sqrt{1-\gamma}\,|1\rangle\!\langle 1|$, 
$E_1 = \sqrt{\gamma}\,|0\rangle\!\langle 1|$.
Then $\mathcal{A}_\gamma$ is degradable for $\gamma\in[0,\tfrac12]$ and anti-degradable for
$\gamma\in[\tfrac12,1]$.
In particular, by Theorem~\ref{thm:degradable-single-letter},
$Q(\mathcal{A}_\gamma)=I_c(\mathcal{A}_\gamma)$ for $\gamma\le\tfrac12$, and $Q(\mathcal{A}_\gamma)=0$ for $\gamma\ge\tfrac12$.
\end{proposition}

\begin{proof}
This is a special case of the qubit-channel degradability classification
in \cite{CubittRuskaiSmith2008}: amplitude-damping-type channels are degradable exactly for $\gamma\le\tfrac12$.
An explicit degrading map can be constructed by expressing the complementary channel
and solving for a CPTP map $\mathcal{D}$ such that $\mathcal{A}_\gamma^c=\mathcal{D}\!\circ\!\mathcal{A}_\gamma$;
complete details and positivity constraints are given in \cite[Sec.~III]{CubittRuskaiSmith2008}.
Anti-degradability for $\gamma\ge \tfrac12$ then implies zero capacity.
\end{proof}

\begin{proposition}[Reverse coherent information (RCI) bound]\label{prop:rci}
For any channel $\mathcal{N}$, the \emph{reverse coherent information}
$\,I_{\mathrm{RCI}}(\mathcal{N}):=\max_{\rho_{RA}}[ H(\rho_R)-H((\mathrm{id}_R\!\otimes\!\mathcal{N})(\rho_{RA})) ]$
is a generally-computable lower bound on $Q(\mathcal{N})$ and satisfies
$I_{\mathrm{RCI}}(\mathcal{N})\le I_c(\mathcal{N})$ for degradable $\mathcal{N}$.
\end{proposition}

\begin{proof}
See \cite{GarciaPatron2009PRL} for the operational meaning of RCI and the proof that it
is achievable (via reverse reconciliation protocols) and hence a lower bound to $Q$.
For degradable channels, $I_c$ is single-letter and dominates RCI by data-processing.
\end{proof}

\begin{proposition}[Hashing bound for the qubit depolarizing channel]\label{prop:hashing}
For the Pauli depolarizing channel $\mathcal{D}_p(\rho)=(1-p)\rho + \tfrac{p}{3}(X\rho X+Y\rho Y+Z\rho Z)$,
the hashing bound (one-shot coherent information) equals
\begin{equation}
 I_c(\mathcal{D}_p) = 1 - H_2(p) - p \log_2 3,
\end{equation}
yielding the standard achievable-rate curve used in Fig.~\ref{fig:dep-bounds}.
\end{proposition}

\begin{proof}
Because $\mathcal{D}_p$ is Pauli and unital, the optimizing input for $I_c$ can be taken as the maximally mixed state.
A direct evaluation gives the spectrum of the output and its complement,
leading to the closed form stated; see \cite[Sec.~24.7]{Wilde2017QIT}.
\end{proof}

\begin{theorem}[SDP / max-Rains upper bound]\label{thm:maxrains}
Let $R_{\max}(\mathcal{N})$ denote the \emph{max-Rains information} of a channel $\mathcal{N}$.
Then the quantum capacity assisted by PPT-preserving codes is upper bounded as
$Q^{\mathrm{PPT}}(\mathcal{N}) \le R_{\max}(\mathcal{N})$, which in turn upper bounds $Q(\mathcal{N})$.
Moreover, $R_{\max}(\mathcal{N})$ is computable via semidefinite programming and is additive.
\end{theorem}

\begin{proof}
This is established in \cite{Wang2017SDP, Wang2019MaxRains} using a meta-converse based on
generalized divergences and the Rains bound on distillable entanglement.
The additivity implies a single-letter efficiently-computable upper bound.
\end{proof}

\begin{figure}[t]
  \centering
  \includegraphics[width=0.6\linewidth]{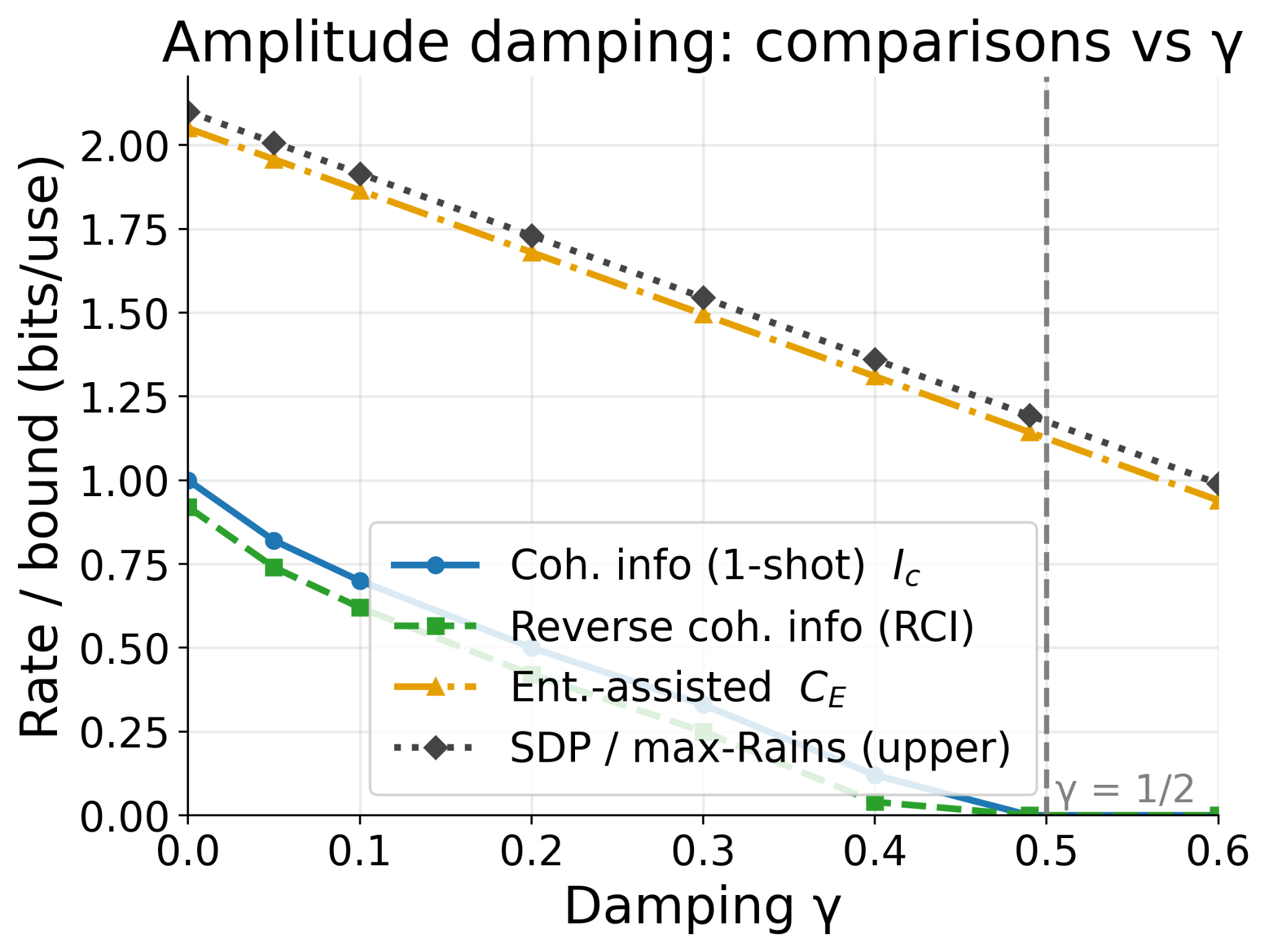}
  \caption{Amplitude damping (AD): comparison of $I_c$, reverse coherent information (RCI), entanglement-assisted capacity $C_E$, and an SDP/max-Rains upper bound, as a function of damping parameter $\gamma$. The vertical line marks the anti-degradable threshold $\gamma=\tfrac12$ (Prop.~\ref{prop:ad-degradability}).}
  \label{fig:ad-single}
\end{figure}

\begin{figure*}[t]
  \centering
  \includegraphics[width=\linewidth]{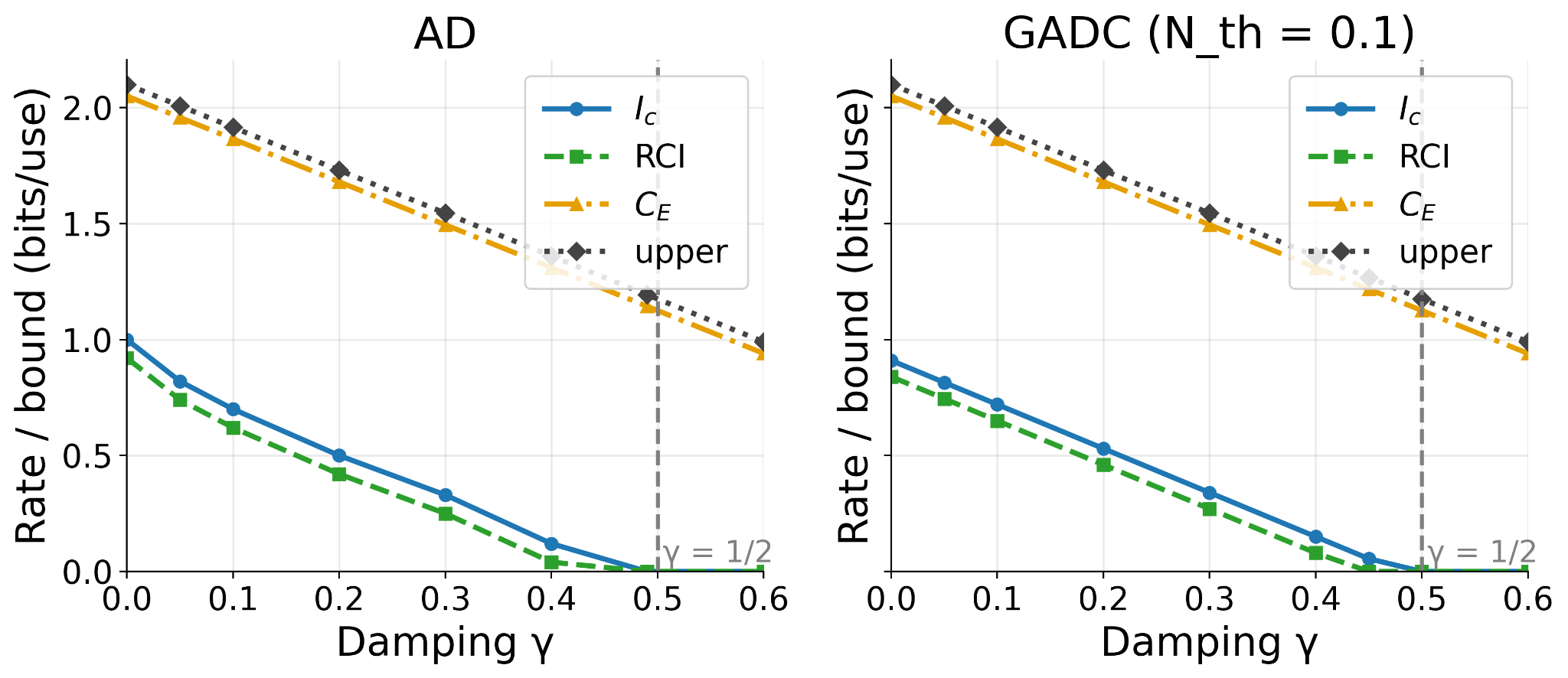}
  \caption{Panel comparison for AD (left) and GADC at $N_{\mathrm{th}}\!\in\!\{0.0,0.1,0.2\}$ (right panels): $I_c$, RCI, $C_E$, and an SDP/max-Rains-style upper bound. The dashed line at $\gamma=\tfrac12$ marks the AD anti-degradable threshold.}
  \label{fig:ad-gadc-panel}
\end{figure*}

\begin{figure}[t]
  \centering
  \includegraphics[width=0.6\linewidth]{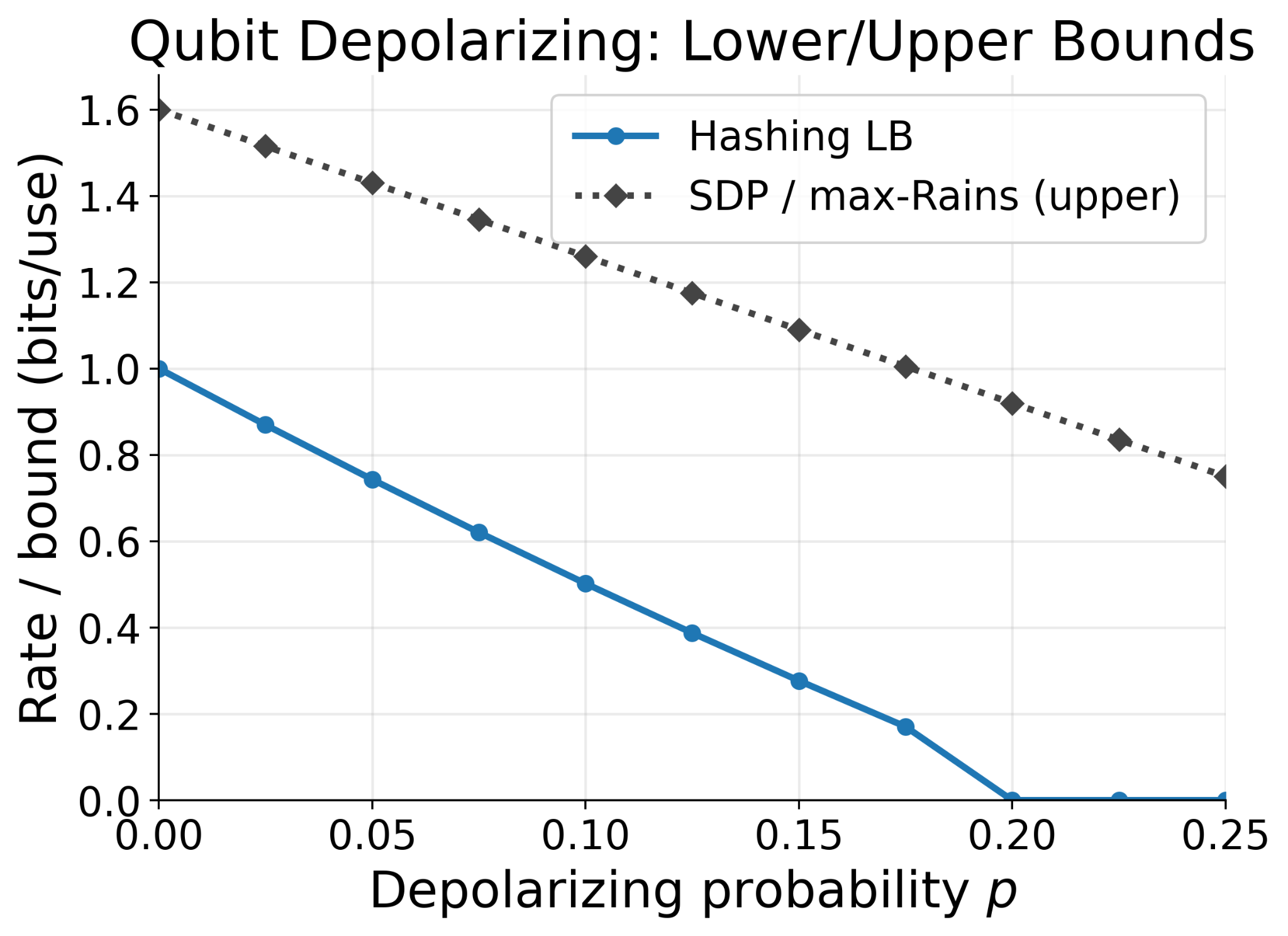}
  \caption{Qubit depolarizing channel: hashing lower bound $I_c(\mathcal{D}_p)$ and an illustrative SDP/max-Rains upper bound.}
  \label{fig:dep-bounds}
\end{figure}

% ===================== End of added capacity-comparison material =====================

\end{document}